\title[ Data-driven Stochastic Output-Feedback Predictive Control]{Data-driven Stochastic Output-Feedback Predictive Control: Recursive Feasibility through Interpolated Initial Conditions}
 \author{\Name{Guanru Pan} \Email{guanru.pan@tu-dortmund.de}\\
  \Name{Ruchuan Ou} \Email{ruchuan.ou@tu-dortmund.de}\\
  \Name{Timm Faulwasser} \Email{timm.faulwasser@ieee.org}\\
  \addr Institute of Energy Systems, Energy Efficiency and Energy
  Economics,\\ TU Dortmund University, Dortmund, 44227 Germany }
\newcommand\scalemath[2]{\scalebox{#1}{\mbox{\ensuremath{\displaystyle #2}}}}
\DeclareMathOperator*{\trace}{trace}
\newcommand{\mbb}[1]{\mathbb{#1 }}
\newcommand{\mbf}[1]{\mathbf{#1}} 
\newcommand{\mcl}[1]{\mathcal{#1}}
\newcommand{\pce}[1]{\mathsf{#1}}
\newcommand{\pcecoe}[2]{\mathsf{#1}^{#2}}
\newcommand{\relx}{(\omega)}
\newcommand{\trar}[2]{\mathbf{#1}_{[0,{#2}]}}
\newcommand{\inst}[1]{_{#1}}
\newcommand{\splx}[1]{\mcl{L}^2(\Omega, \mathcal{F}, \mu; \mathbb{R}^{#1})}
\newcommand{\spl}{\mcl{L}^2(\Omega, \mathcal{F}, \mu; \mathbb{R})} 
\newcommand{\diff}{\mathop{}\!\mathrm{d}}
\newcommand{\mean}{\mbb{E}}
\newcommand{\var}{\mbb{V}}
\newcommand{\hankel}{\mcl{H}}
 \newcommand{\covar}{\Sigma}
\newcommand{\Zf}{\mbb{Z}_{\text{f}}}
\newcommand{\Tini}{{T_{\text{ini}}}}
\newcommand{\lag}{\Tini}
\newcommand{\ini}{{\text{ini}}} 
\newcommand{\dimx}{{n_x}}
\newcommand{\dimy}{{n_y}}
\newcommand{\dimu}{{n_u}}
\newcommand{\dimw}{{n_w}}
\newcommand{\dimz}{{n_z}}
\newcommand{\dimv}{{n_v}}
\newcommand{\I}{\mathbb{I}}
\newcommand{\N}{\mathbb{N}}
\newcommand{\R}{\mathbb{R}}
\newtheorem{assumption}{Assumption}
\newcommand{\norm}[1]{\|{#1}\|_{\mcl{L}^2}}
\begin{document}

\maketitle

\begin{abstract}%
The paper investigates data-driven output-feedback  predictive control of linear systems subject to stochastic disturbances. The scheme relies on the recursive solution of a suitable data-driven reformulation of a stochastic Optimal Control Problem (OCP), which allows for  forward prediction and optimization of statistical distributions of inputs and outputs. Our approach avoids the use of parametric system models. Instead it is based on previously recorded data using a recently proposed  stochastic variant of Willems' fundamental lemma. The stochastic variant of the lemma is applicable to a large class of linear dynamics subject to stochastic disturbances of Gaussian and non-Gaussian nature.  To ensure recursive feasibility, the initial condition of the OCP---which consists of information about past inputs and outputs---is considered as an extra decision variable of the OCP. We provide sufficient conditions for recursive feasibility and closed-loop practical stability of the proposed scheme as well as performance bounds. Finally, a numerical example illustrates the efficacy and closed-loop properties of the proposed scheme. 
\end{abstract}

\begin{keywords}%
 Data-driven control, stochastic predictive control,  Willems' fundamental lemma,
\end{keywords}

\section{Introduction}
Data-driven control based on the so-called fundamental lemma by \cite{Willems2005} has attracted a lot of research interest, see \cite{Markovsky21r,DePersis19} for recent reviews. The pivotal insight of the fundamental lemma is that any controllable LTI system can be characterized by its recorded input-output trajectories provided a persistency of excitation condition to be satisfied. In the context of predictive control this implies that the prediction of future input and output trajectories can be done based on measurements of past trajectories thus alleviating the need for system identification and state estimator  design \citep{Yang15,Coulson2019,L4DC-lian21a,L4DC-allibhoy20a}. Hence data-driven predictive control schemes are consider for different applications, e.g.~\citep{Carlet2022,Bilgic22,Wang2022a}. Rigorous stability  guarantees are provided by \cite{Berberich20,Berberich2021t,Bongard2022} for data-driven predictive control schemes with respect to deterministic LTI systems subjected to noisy measurements.

However, besides measurement noise, so far little has been done on data-driven predictive control for LTI systems subject to stochastic disturbances. One key challenge is the prediction of the future evolution of statistical distributions of the inputs (respectively input policies) and outputs in a data-driven fashion. One approach is to predict the expected value of the future trajectory by the fundamental lemma while handling the stochastic uncertainties offline by probabilistic constraint tightening \citep{Kerz21d} or directly omitting the stochastic uncertainties in the prediction \citep{Wang2022}. Alternatively, leveraging the framework of Polynomial Chaos Expansions (PCE) \citep{sullivan15introduction} a stochastic variant of Willems' fundamental lemma has been proposed \citep{Pan21s}. It allows to predict future statistical distributions of inputs and outputs via previously recorded data and knowledge about the distribution of the disturbance.

Extending the conceptual ideas of \cite{Pan21s} towards stochastic data-driven predictive control with guarantees, \cite{tudo:pan22a} present first results in a state-feedback setting, while the output-feedback case is discussed by \cite{Pan2022}. Therein, sufficient conditions of recursive feasibility and stability are provided by a data-driven design of terminal ingredients and a selection strategy of the initial condition which is similar to the model-based approach of \cite{farina13probabilistic,Farina2015}. 
The selection of initial conditions considers a binary choices: the current measured value or its predicted value based on the last optimal solution.

In this paper, we extend the data-driven stochastic  output-feedback  predictive scheme proposed by \cite{Pan2022} with an improved initialization strategy. Instead of the binary selection strategy described above, the proposed scheme interpolates between the measured value and the latest prediction in a continuous fashion. In model-based stochastic predictive control this has been considered by \cite{Koehler2022,Schlueter2022}. The main contribution of the present paper are sufficient conditions for stability and recursive feasibility of the proposed data-driven stochastic  output-feedback  scheme.

\section{Problem statement and preliminaries}
\label{sec:preliminary}
We investigate a data-driven stochastic output feedback approach to control of LTI systems with unknown system matrices.  Hence, we first detail the considered setting, then we recall the representation of $\mathcal L^2$ random variables via polynomial chaos, before we arrive at the stochastic fundamental lemma. 

\subsection{Considered system class}
We consider stochastic LTI systems in  AutoRegressive with eXtra input (ARX) form
\begin{equation}\label{eq:ARX}
	Y_{k} = \Phi Z_{k} + D U_{k} +W_{k},\quad Z_0 = Z_{\ini},
\end{equation}
with input $U_k \in \splx{\dimu}$, output $Y_k\in $ $\splx{\dimy}$,  disturbance~$W_k \in$ $\splx{\dimw}$ $(\dimw = \dimy)$, and extended state
\[Z\inst{k} \doteq \begin{bmatrix}
	\mbf{U}_{[k-\lag,k-1]}\\
	\mbf{Y}_{[k-\lag,k-1]} \end{bmatrix} = \left[ U_{k-\lag}^\top,U_{k-\lag +1}^\top , \cdots, U_{k-1}^\top,Y_{k-\lag}^\top, \cdots, Y_{k-1}^\top  \right]^\top\in \splx{\dimz},\]
 with $\dimz =\lag(\dimu+\dimy)$ which contains last $\lag$ inputs and outputs. 
 
Here, $\Omega$ denotes the sample space, $\mcl{F}$ is a $\sigma$ algebra, and $\mu$ is the considered probability measure. With the specification of the underlying probability space $(\Omega,\mcl{F},\mu)$ to be $\mcl{L}^2$, we restrict the consideration to  random variables with finite expectation and finite (co)-variance. Throughout this paper, the system matrices  $\Phi \in \R^{\dimy \times \dimz}$ and $D \in \R^{\dimy\times \dimu}$ are considered to be unknown, while the  statistical distributions of initial condition $Z_{\ini}$ and disturbances $\{W_{k}\}_{k \in \N}$ are supposed to be known. Furthermore, we consider that all $\{W_{k}\}_{k \in \N}$ are identical independently distributed ($i.i.d.$) with zero mean and finite co-variance, i.e., we assume for all $ k \in \N$,  $ \mean[W_k] =0$ and $ \covar[W_{k}]= \Sigma_W$.
 We remark that the initial condition and the disturbances are not restricted to be Gaussian.

 For a specific uncertainty outcome $\omega \in \Omega$, we denote the realization of $W_{k}$ as $w_{k} \doteq W_{k}(\omega)$. Likewise, the input, output, and extended state realizations are written as  $u_k \doteq U_k\relx$, $y_k \doteq Y_k\relx$, and $z_k \doteq Z_k\relx$, respectively. Moreover,  given $z_{\ini}$ and $w_k$, $k\in \N $, the stochastic system \eqref{eq:ARX}  induces the \textit{realization dynamics}  
\begin{equation}\label{eq:ARX_realization}
	y_{k} = \Phi z_{k} + D u_{k} +w_{k},\quad z_0 = z_{\ini}.
\end{equation}	
Throughout the paper,  we assume the input, output, and disturbance realizations $u_k$, $y_k$, and $w_{k-1}$ to be measured at time instant $k$. For the case of unmeasured disturbances, we refer to the disturbance estimation schemes tailored to ARX models, see~\citep{Pan21s,Wang2022} .

\begin{assumption}[Minimal state-space representation]\label{ass:minimal_state_system}~\\
	There  exists	 a minimal state-space representation 
	\begin{subequations}\label{eq:RVdynamics}
		\begin{align}
			X\inst{k+1} &= AX\inst{k} +BU\inst{k}+ EW\inst{k},\quad X_0=X_{\ini}  \\
			Y\inst{k} &= CX\inst{k} + DU\inst{k} + W\inst{k}, \label{eq:RVdynamics_y}
		\end{align}
	\end{subequations}
	with  $(A,B)$ controllable  and  $(A,C)$ observable such that for some initial condition $X_{\ini}$ and $W_k$, $k \in \N$, the input-output trajectories of  \eqref{eq:ARX} and  \eqref{eq:RVdynamics} coincide.
\end{assumption}
For insights into the problem of mapping  a ARX model to its minimal state-space representations, see \cite{Sadamoto2022,Wu2022}.
\subsection{Representation of random variables via polynomial chaos expansions}
It is well-known that for stochastic LTI systems subject to Gaussian disturbances, the evolution of statistical distributions of inputs (generated via affine policies) and outputs can be exactly represented by the first two moments, cf. \cite{farina13probabilistic,Farina2015}. However, this is not necessarily the case for non-Gaussian disturbances. Moreover, observe that already in a Gaussian setting moments constitute non-linear representations of random variables as any scalar Gaussian is given by the sum of its mean with its standard deviation ($=$ square root of variance) times a standard normal-distributed random variable. Alternatively, one could rely on scenario-based approach and   sampling strategies~\citep{Kantas2009,Tempo2013,Schildbach2014}. However, this induces substantial computational effort. 

Alternatively, we employ Polynomial Chaos Expansions (PCE) to provide a tractable linear surrogate of  \eqref{eq:ARX} by representing $\mathcal{L}^2$ random variables in a suitable polynomial basis.  PCE dates back to \cite{wiener38homogeneous}, and we refer to \cite{sullivan15introduction} for a general introduction, see, e.g.,  \citep{paulson14fast,Mesbah14,Ou21} for applications in control.

Consider an orthogonal polynomial basis $\{\phi^j\}_{j=0}^\infty$ which spans $\spl$, i.e.,
\[
\langle \phi^i, \phi^j \rangle \doteq \int_{\Omega} \phi^i(\omega)\phi^j(\omega) \diff \mu(\omega) = \delta^{ij}\norm{\phi^j}^2,
\]
where $\delta^{ij}$ is the Kronecker delta and $\norm{V}\doteq \sqrt{ \langle V,V\rangle}$ represents the induced form of $V \in \spl$.
With respect to the basis $\{\phi^j\}_{j=0}^\infty$,  a real-valued random scalar variable  $V\in \spl$ can be expressed as
$
 	V = \sum_{j=0}^{\infty}\pcecoe{v}{j} \phi^j$ with $\pcecoe{v}{j} = \langle V, \phi^j \rangle/\norm{\phi^j}^2,
$
 where $\pcecoe{v}{j} \in \R$ is called the $j$-th PCE coefficient. 
For a vector-valued random variable $V\in\splx{n_v}$ applying PCE component-wise the $j$-th coefficient is given as
$\pce{v}^j =$ $ \begin{bmatrix} \pcecoe{v}{1,j} & \pcecoe{v}{2,j} & \cdots & \pcecoe{v}{n_v,j} \end{bmatrix}^\top$ where  $\pcecoe{v}{i,j}$ is the $j$-th PCE coefficient of the component $V^i$ of $V$.

In practice one often terminates the PCE series after a finite number of terms for more efficient computation. However, this may lead to truncation errors \citep{muehlpfordt18comments}. Fortunately, random variables that follow some widely-used distributions admit exact finite-dimensional PCEs with only two terms in suitable polynomial bases, see \citep{koekoek96askey, xiu02wiener}. For example, the Legendre
basis is preferably chosen for uniformly-distributed random variables and Hermite polynomials are used for Gaussians. 

\begin{definition}[Exact PCE representation \citep{muehlpfordt18comments}]
The PCE of a  random variable $V\in \splx{n_v}$ is said to be exact  with dimension $L$ if
$	V - \sum_{j=0}^{L-1} \pcecoe{v}{j}\phi^j=0$. 
\end{definition}
 Notice that with an exact PCE of finite dimension, the expected value, variance and covariance of $V\in\splx{n_v}$ can be efficiently calculated from its PCE coefficients
\begin{equation}\label{eq:PCEmoments}
 \mean\big[V\big] = \pce{v}^0,\quad \var \big[V\big] = \sum_{j=1}^{L-1} (\pce{v}^{j})^2\norm{\phi^j}^2,
	\quad  \covar \big[V\big] = \sum_{j=1}^{L-1} \pce{v}^j\pce{v}^{j\top}\norm{\phi^j}^2,
\end{equation}
where $(\pce{v}^{j})^2 \doteq \pce{v}^{j} \circ \pce{v}^{j}$ denotes the Hadamard product \citep{Lefebvre20}. Finally, observe that (finite or infinite) PCEs of $\mathcal L^2$ random variables constitute linear representations of random variables. 

 To the end of reformulation of \eqref{eq:ARX} with finite dimensional PCEs,  we assume that $Z_{\ini}$ and $W_{k}, k\in \N^+$ admit exact PCEs in the basis $\{\phi^j_{\ini} \}_{j=0}^{L_\ini -1}$, respectively,  in the basis $\{\varphi_k^j\}_{j=0}^{L_w -1}$. 
Consider system~\eqref{eq:ARX} for a finite horizon $N \in \N$ and the finite-dimensional joint basis 
 	\begin{align}	 
	\{\phi^j\}_{j=0}^{L-1} = \left\{1, \{\phi^j_{\ini} \}_{j=1}^{L_\text{ini}-1}, \bigcup_{i=0}^{N-1} \{\varphi_{i}^j\}_{j=1}^{L_w-1} \right\}, \qquad L &= L_{\text{ini}} + N(L_w-1) \in \N^+.\label{eq:finite_basis}
\end{align}
Then, for all $k\in\I_{[0,N-1]}$, $U_k$, $Y_k$, and thus $Z_k$ also admit exact PCEs in the chosen basis if $U_k$ is designed as an affine feedback of $Z_k$, cf. the formal proofs given in \citep{Pan21s, Pan2022}. Note that the key aspect of $\{\phi^j\}_{j=0}^{L-1}$ is that it is the union of the bases for $Z_{\ini}$ and $W_{k}$, $ k \in \I_{[0,N-1]}$. Thus, PCE enables  uncertainty propagation over any finite prediction horizon. 
	
 Replacing all random variables in \eqref{eq:ARX} with their PCE representation in the basis $ \{\phi^j\}_{j=0}^{L-1}$ and projecting onto the basis functions $\phi^j$, we obtain the dynamics of the PCE coefficients.  For given $\pce{z}^j_{\ini}$ and $\pce{w}^j_{k}$, $k\in \N$, the \textit{dynamics of the PCE coefficients} read
\begin{equation}\label{eq:ARX_PCE}
	\pce{y}^j_{k} = \Phi \pce{z}^j_{k}+ D \pce{u}^j_{k}   + \pce{w}^j_{k}, \quad   \pce{z}^j_{0} = \pce{z}^j_{\ini}, \quad j\in \I_{[0,L-1]}.
\end{equation}

\subsection{Data-driven system representation via the stochastic fundamental lemma}
It is the linearity of the series expansion which ensures that the original stochastic system \eqref{eq:ARX} and its PCE formulation \eqref{eq:ARX_PCE} are structurally similar. Subsequently, we recall the stochastic variant of Willems' fundamental lemma which exploits this structural similarity. 

\begin{definition}[Persistency of excitation \citep{Willems2005}] Let $T, t \in \mbb{N}^+$. A sequence of real-valued inputs $\trar{u}{T-1}$ is said to be persistently exciting of order $t$ if the Hankel matrix
	\begin{equation*}
		\hankel_t(\trar{u}{T-1}) \doteq \begin{bmatrix}
			u\inst 0   &\cdots& u\inst{T-t} \\
			\vdots & \ddots & \vdots \\
			u\inst{t-1}& \cdots  & u\inst{T-1} \\
		\end{bmatrix}
	\end{equation*}
	is of full row rank. 
\end{definition}

\begin{lemma}[Stochastic fundamental lemma~\citep{Pan21s,Faulwasser2022}]\label{lem:RVfundamental} Let Assumption~\ref{ass:minimal_state_system} hold, and consider the stochastic LTI system \eqref{eq:ARX} and its $\mcl L^2(\Omega, \mcl F_k,\mu;\R^{\dimv})$, 
	$\dimv \in\{ \dimu, \dimw, \dimy\}$ trajectories of 
	random variables and the corresponding realization trajectories from \eqref{eq:ARX_realization}, which are $\trar{(U,W,Y)}{T-1}$ and $\trar{(u,w,y)}{T-1}$, respectively.  
	Let  $\trar{(u,w)}{T-1}$ be persistently exciting of order $\dimx +t$. Then
	$ (\widetilde{\mbf{U}}, \widetilde{\mbf{W}},\widetilde{\mbf{Y}})_{[0,t-1]}$ is a trajectory of \eqref{eq:ARX} if and only if there exists $G \in \splx{T-t+1} $ such that 
	\begin{equation} \label{eq:RVfunda}
		\hankel_t(\trar{v}{T-1}) G=\widetilde{\mbf{V}}_{[0,t-1]}
	\end{equation} 
holds	for all $(\mbf{v}, \widetilde{\mbf{V}})\in \{(\mbf{u},\widetilde{\mbf{U}}), (\mbf{w}, \widetilde{\mbf{W}}),(\mbf{y},\widetilde{\mbf{Y}})\} $. 
\end{lemma}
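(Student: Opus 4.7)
The plan is to reduce the stochastic lemma to the classical deterministic Willems' fundamental lemma by exploiting the linearity of PCE. The crucial structural observation, already visible in equation \eqref{eq:ARX_PCE}, is that projecting \eqref{eq:ARX} onto any basis function $\phi^j$ yields a deterministic recursion with exactly the same system matrices $\Phi$ and $D$ as the realization dynamics \eqref{eq:ARX_realization}. Hence for every coefficient index $j$, the PCE-coefficient trajectories $(\pce{u}^j,\pce{w}^j,\pce{y}^j)_{[0,T-1]}$ of the data and the recorded realizations $(\mbf{u},\mbf{w},\mbf{y})_{[0,T-1]}$ are trajectories of one and the same deterministic LTI system, so the classical Willems lemma applies with the same persistently-exciting data.

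First, I would pick a common $\mcl{L}^2$-orthogonal basis $\{\phi^j\}$ in which the data random variables, the candidate trajectory $(\widetilde{\mbf{U}},\widetilde{\mbf{W}},\widetilde{\mbf{Y}})_{[0,t-1]}$, and the sought random vector $G$ all admit PCEs. Because the Hankel matrices $\hankel_t(\mbf{v}_{[0,T-1]})$ are deterministic, equation \eqref{eq:RVfunda} holds in $\mcl{L}^2$ if and only if $\hankel_t(\mbf{v}_{[0,T-1]})\pce{g}^j = \widetilde{\pce{v}}^j_{[0,t-1]}$ holds for every $j$ and every $v\in\{u,w,y\}$. This observation reduces the iff-claim to a coefficient-wise deterministic statement per mode $j$.

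For the \emph{only if} direction, assume $(\widetilde{\mbf{U}},\widetilde{\mbf{W}},\widetilde{\mbf{Y}})_{[0,t-1]}$ is a trajectory of \eqref{eq:ARX}. By the structural observation, each triple $(\widetilde{\pce{u}}^j,\widetilde{\pce{w}}^j,\widetilde{\pce{y}}^j)_{[0,t-1]}$ is a length-$t$ trajectory of the deterministic ARX system with matrices $\Phi,D$; Assumption~\ref{ass:minimal_state_system} supplies the minimal state-space realization needed to invoke the deterministic Willems lemma, which then produces for each $j$ some $\pce{g}^j\in\R^{T-t+1}$ satisfying the per-mode Hankel equality. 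Taking, e.g., the minimum-norm $\pce{g}^j$ in each mode and setting $G \doteq \sum_j \pce{g}^j\phi^j$, the boundedness of the Moore--Penrose pseudo-inverse of the row-full-rank Hankel matrix yields $\sum_j \|\pce{g}^j\|^2\|\phi^j\|_{\mcl{L}^2}^2 < \infty$, hence $G\in\splx{T-t+1}$. The converse is obtained by reversing the steps: given $G$ satisfying \eqref{eq:RVfunda}, its per-mode coefficients satisfy the deterministic Hankel equalities, so the deterministic lemma makes each coefficient triple a trajectory of the underlying LTI system, and summing back in $\mcl{L}^2$ reconstructs $(\widetilde{\mbf{U}},\widetilde{\mbf{W}},\widetilde{\mbf{Y}})_{[0,t-1]}$ as a trajectory of \eqref{eq:ARX}.

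I expect the main technical obstacle to be the $\mcl{L}^2$-convergence of $G=\sum_j \pce{g}^j\phi^j$: because the per-mode solution $\pce{g}^j$ is non-unique whenever $\hankel_t$ has a non-trivial kernel, one must make a careful selection (the pseudo-inverse being the natural one) and verify that the resulting series of coefficients remains square-summable against the basis norms. A secondary, more bookkeeping-style point is to confirm that the common basis may be chosen so that the per-mode split is lossless in both directions, which is automatic once the basis is taken to contain the bases associated with the initial condition and the disturbance sequence driving each considered trajectory.
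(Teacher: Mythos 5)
Your proposal is correct and follows essentially the route the paper itself indicates: the paper gives no proof of Lemma~\ref{lem:RVfundamental} (it defers to \cite{Pan21s,Faulwasser2022}) but states the key insight explicitly --- the random-variable dynamics \eqref{eq:ARX}, the realization dynamics \eqref{eq:ARX_realization}, and the PCE-coefficient dynamics \eqref{eq:ARX_PCE} share the same matrices $\Phi, D$, so the deterministic fundamental lemma applies mode by mode and the pieces are reassembled in $\mcl L^2$. The only nuance worth recording is that the cited references establish the random-variable statement by applying the deterministic lemma $\omega$-pointwise to realizations and then verifying measurability and square-integrability of the selected $G(\omega)$, reserving the coefficient-wise argument for Corollary~\ref{cor:mixed_funda}; your per-coefficient decomposition is an equally valid reduction, and your use of the minimum-norm (pseudo-inverse) selection to guarantee $\sum_j \|\pce{g}^j\|^2\norm{\phi^j}^2<\infty$ is exactly the device needed to resolve the non-uniqueness of $\pce{g}^j$ in either route (note only that the stacked Hankel matrix is not full row rank --- the linear boundedness of the pseudo-inverse on its column space is what you actually use).
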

\begin{corollary}[PCE coefficients via realizations~\citep{Pan21s,Faulwasser2022}]\label{cor:mixed_funda}~\\
	Let the conditions of Lemma~\ref{lem:RVfundamental} hold. Then, for $j \in \I_{[0,L-1]}$, $\trar{ ( \tilde{\pce u}, \tilde{\pce w},\tilde{\pce y})}{t-1}^j $ is a PCE coefficient trajectory of \eqref{eq:ARX_PCE}
	if and only if there exists a $\pcecoe{g}{j}\in \R^{T-t+1}$ such that 
	\begin{equation} \label{eq:mixed_funda}
		\hankel_t(\trar{v}{T-1}) \pcecoe{g}{j}= \trar{ \tilde{\pce v}}{t-1}^j
	\end{equation}
	holds for all $(\mbf{v}, \tilde{\pce{v}})\in \{ (\mbf{u},\tilde{\pce{u}}), (\mbf{w},\tilde{\pce{w}}),(\mbf{y},\tilde{\pce{y}})\} $.
\end{corollary}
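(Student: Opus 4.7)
The plan is to leverage Lemma~\ref{lem:RVfundamental} together with the fact that the Hankel matrix $\hankel_t(\trar{v}{T-1})$ has purely deterministic real-valued entries, so that the stochastic identity \eqref{eq:RVfunda} decomposes coefficient-wise upon projection onto the PCE basis $\{\phi^j\}_{j=0}^{L-1}$.

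For the ``if'' direction, I would start from coefficient vectors $\pcecoe{g}{j} \in \R^{T-t+1}$, $j \in \I_{[0,L-1]}$, satisfying \eqref{eq:mixed_funda}, and assemble the random vector $G \doteq \sum_{j=0}^{L-1} \pcecoe{g}{j}\phi^j \in \splx{T-t+1}$. Because the Hankel matrix is deterministic, linearity of the PCE sum gives
\[
\hankel_t(\trar{v}{T-1})\, G = \sum_{j=0}^{L-1}\bigl(\hankel_t(\trar{v}{T-1})\pcecoe{g}{j}\bigr)\phi^j = \sum_{j=0}^{L-1} \trar{\tilde{\pce v}}{t-1}^j\phi^j = \widetilde{\mbf{V}}_{[0,t-1]}
\]
for each of the three pairs $(\mbf{v},\widetilde{\mbf{V}})$. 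Invoking Lemma~\ref{lem:RVfundamental} then certifies that $(\widetilde{\mbf{U}},\widetilde{\mbf{W}},\widetilde{\mbf{Y}})_{[0,t-1]}$ is a trajectory of \eqref{eq:ARX}, and projecting \eqref{eq:ARX} onto each $\phi^j$ (using orthogonality) shows that the given PCE coefficient sequences solve \eqref{eq:ARX_PCE}.

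For the ``only if'' direction, given PCE coefficient trajectories satisfying \eqref{eq:ARX_PCE}, I would reassemble the random variables $\widetilde{U}_k$, $\widetilde{W}_k$, $\widetilde{Y}_k$ through their PCE series in $\{\phi^j\}_{j=0}^{L-1}$; multiplying each identity \eqref{eq:ARX_PCE} by $\phi^j$ and summing reproduces the stochastic dynamics \eqref{eq:ARX}. Lemma~\ref{lem:RVfundamental} then supplies some $G \in \splx{T-t+1}$ satisfying \eqref{eq:RVfunda}.

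The main obstacle I anticipate is to argue that this $G$ can be taken to admit an exact PCE in the same finite basis $\{\phi^j\}_{j=0}^{L-1}$, since the lemma only guarantees existence of $G$ in the ambient $\mcl{L}^2$ space and a generic choice might pull in extra basis functions. I would resolve this by observing that persistency of excitation makes $\hankel_t(\trar{v}{T-1})$ right-invertible; fixing any right inverse $\hankel_t^\rinv$ and setting $G \doteq \hankel_t^\rinv\, \widetilde{\mbf V}_{[0,t-1]}$ yields a random vector that inherits the exact PCE of $\widetilde{\mbf V}_{[0,t-1]}$ in the chosen basis, with coefficients $\pcecoe{g}{j} = \hankel_t^\rinv\, \trar{\tilde{\pce v}}{t-1}^j$. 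Projecting the resulting Hankel identity onto each $\phi^j$ then delivers \eqref{eq:mixed_funda} and closes the argument.
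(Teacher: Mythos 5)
The paper itself gives no proof of this corollary; it defers to \cite{Pan21s,Faulwasser2022} and only records the key idea, namely that the PCE coefficient dynamics \eqref{eq:ARX_PCE} share the system matrices $\Phi$, $D$ with the realization dynamics \eqref{eq:ARX_realization}, so the deterministic fundamental lemma can be applied to each coefficient index $j$ separately with Hankel matrices built from realization data. Your route is genuinely different: you pass through the stochastic Lemma~\ref{lem:RVfundamental} by reassembling random variables from the coefficients and projecting back onto the basis. The ``if'' direction of your argument is sound: $G=\sum_{j=0}^{L-1}\pcecoe{g}{j}\phi^j$ is a legitimate element of $\splx{T-t+1}$, the Hankel matrices are deterministic, and projecting the resulting trajectory identity onto each $\phi^j$ recovers \eqref{eq:ARX_PCE}.

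The ``only if'' direction, however, breaks at the last step. You rightly observe that the $G$ supplied by Lemma~\ref{lem:RVfundamental} need not admit an exact PCE in the finite basis, but the repair via a right inverse does not work as stated: the same $\pcecoe{g}{j}$ must satisfy \eqref{eq:mixed_funda} simultaneously for $\mbf{u}$, $\mbf{w}$ and $\mbf{y}$, so the relevant matrix is the stacked Hankel matrix $[\hankel_t(\trar{u}{T-1});\,\hankel_t(\trar{w}{T-1});\,\hankel_t(\trar{y}{T-1})]$, and this matrix is generically \emph{not} of full row rank (its rank is at most $t(\dimu+\dimw)+\dimx$, since the output rows are linearly determined by the input, disturbance and initial-state data), so no right inverse exists; using right inverses of the individual blocks instead would produce three different $G$'s. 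The fix is simpler and needs no invertibility at all: take the $G\in\splx{T-t+1}$ delivered by Lemma~\ref{lem:RVfundamental} and set $\pcecoe{g}{j}\doteq \langle G,\phi^j\rangle/\norm{\phi^j}^2$, the $j$-th Fourier--PCE coefficient of $G$, which exists for every $\mcl{L}^2$ random vector whether or not its expansion terminates in the chosen basis. Projecting the almost-sure identity \eqref{eq:RVfunda} onto $\phi^j$ and using that the Hankel matrices are deterministic immediately yields $\hankel_t(\trar{v}{T-1})\pcecoe{g}{j}=\trar{ \tilde{\pce v}}{t-1}^j$. (A Moore--Penrose pseudoinverse of the stacked matrix would also work, because \eqref{eq:RVfunda} places the stacked trajectory, and hence each of its projection coefficients, in the column space; but a genuine right inverse is not available.) With this correction your argument closes; the referenced proofs reach the same conclusion more directly via the per-coefficient deterministic argument.
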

As the page limit prohibits the detailed discussion of the above we refer to   \cite{Pan21s,Faulwasser2022} for  proofs and examples. 
However, the key insight underlying the above results is that the dynamics of random variables \eqref{eq:ARX}, its realization dynamics \eqref{eq:ARX_realization}, and the dynamics of PCE coefficients \eqref{eq:ARX_PCE} share the same system structure and matrices. Hence, their finite-length trajectories can be linked to the trajectories of realization dynamics \eqref{eq:ARX_realization} as shown in \eqref{eq:RVfunda} and \eqref{eq:mixed_funda}. Put differently, in \eqref{eq:RVfunda} and \eqref{eq:mixed_funda} \textit{Hankel matrices in measured realization data} are used to predict the future evolution of random variables.

\section{ Data-driven stochastic output-feedback  predictive control} \label{sec:MPC}
In this section, we extend the  data-driven stochastic output-feedback  predictive control \cite{Pan2022}  through  interpolation of the initial condition. We provide sufficient conditions of recursive feasibility, performance, and stability of the extended scheme.

\subsection{Data-driven stochastic OCP with interpolated initial conditions}\label{sec:OCP}
Consider the stochastic LTI system~\eqref{eq:ARX}, its realization dynamics~\eqref{eq:ARX_realization}, and its PCE coefficients dynamics~\eqref{eq:ARX_PCE} with respect to the basis $\{\phi^j\}_{j=0}^{L-1}$, cf. \eqref{eq:finite_basis}.
 Suppose that a realization trajectory $\trar{(u,w,y)}{T-1}$ of \eqref{eq:ARX_realization} is available with $\trar{(u,w)}{T-1}$ persistently exciting of order $\dimx +N+\Tini$.
 
In the following, consider $	\pcecoe{v}{[0,L-1]}\doteq [\pcecoe{v}{0\top},\pcecoe{v}{1\top},\dots,\pcecoe{v}{L-1\top}]^\top$ as the vectorization of PCE coefficients over PCE dimensions, and let $\pce{v}^j_{i|k}$ be the predicted value of $\pce{v}^j_{k+i}$ at time instant $k$, $\pce{v}\in\{\pce{u},\pce{y},\pce{z}\}$. 
 At time instant $k \in \N$, given the PCE coefficients of the disturbances, i.e.  $\pce{w}_{k+i}^{[0,L-1]}$, $i \in \I_{[0,N-1]}$, the realization of the current extended state $z_k$, and the predicted PCE coefficients from last step $\pcecoe{z}{[0,L-1],\star}_{1|k-1}$, we consider the following data-driven stochastic OCP 
 \begin{subequations}\label{eq:PCEOCP}
		\begin{gather}
V_N\left(z_k,\pcecoe{z}{[0,L-1],\star}_{1|k-1}\right)\doteq		\min_{
					\pcecoe{(u,y,z)}{\cdot}_{\cdot|k},\pcecoe{g}{\cdot},\mu}\,
		\sum_{j=0}^{L-1}\norm{\phi^j}^2  \left(	\sum_{i=0}^{N-1}  \Big(\| \pcecoe{y}{j}_{i|k}\|^2_Q  +\|\pcecoe{u}{j}_{i|k}\|^2_R\Big)
		+ \|\pce{z}^j_{N|k}\|_P^2 \right)   \label{eq:H_PCE_SOCP_obj} 	\\
		\text{subject to }		\quad \left[\begin{array}{l} 
				\hankel_{N+\Tini}(\trar{u}{T-1})\\
				\hankel_{N+\Tini}(\trar{y}{T-1})\\
				\hankel_{N}(\mbf{w}_{[\Tini,T-1]})\\
		\end{array} \right]    
			\pcecoe{g}{j}	= \left[\begin{array}{l}    
				\pce{u}^j_{[-\Tini,N-1]|k}\\
				\pce{y}^j_{[-\Tini,N-1]|k}\\     
				\pce{w}^j_{[k,k+N-1]}\\
			\end{array} \right]
			,  \quad \forall  j\in \mbb{I}_{[0,L-1]}\label{eq:H_PCE_SOCP_hankel}
			\\		
							\pcecoe{z}{[0,L-1]}_{0|k}	\in \mbb{Z}_{\ini}\left(\mu,z_k,\pcecoe{z}{[0,L-1],\star}_{1|k-1}\right) \label{eq:PCEOCP_Ini}\\
		 	\pcecoe{v}{0}_{i|k}  \pm \sigma(\varepsilon_v)\sqrt{\sum_{j=1}^{L-1} {(\pcecoe{v}{j}_{i|k}})^2\langle \phi^j\rangle^2} \in \mathbb V, \quad \pce{v}\in\{\pce{u},\pce{y}\},\quad \forall i \in \mbb{I}_{[0,N-1]} \label{eq:PCE_chance}\\
			 \pce{u}_{i|k}^{j'}= 0,\quad \forall j'\in \I_{[L_{\text{ini}}+i(L_w-1)+1,L-1]}, \quad \forall  i\in \mbb{I}_{[0,N-1]}  \label{eq:causality},\\
 \pcecoe{z}{0}_{N|k} \in \Zf, \quad  \sum_{j=1}^{L-1}\pcecoe{z}{j\top}_{N|k}\Gamma\pcecoe{z}{j}_{N|k}\norm{\phi^j}^2\leq \gamma, 	\label{eq:terminal} \\
 \pcecoe{z}{j}_{i|k}=\begin{bmatrix}
 	\pcecoe{u}{j}_{[i-\Tini,i-1]|k}\\
 	\pcecoe{y}{j}_{[i-\Tini,i-1]|k} 
 \end{bmatrix}, \quad i \in \I_{[0,N]}, \quad j \in \I_{[0,L-1]}. 	\label{eq:uyz}
		\end{gather} 
	\end{subequations}
Observe that the decision variables $\pce{z}$ are redundant. They are introduced for the sake of compact notation and they can easily be avoided via  \eqref{eq:uyz}. 

The objective function \eqref{eq:H_PCE_SOCP_obj} penalizes the predicted input and output PCE coefficients with $R=R^\top\succ 0 $ and $Q=Q^\top\succ 0$. Moreover,  $P=P^\top \succ 0$ characterizes the terminal cost with respect to the PCE coefficients $\pce{z}^j_{N|k}$. Formulated in terms of PCE coefficients, the objective function \eqref{eq:H_PCE_SOCP_obj} is equivalent to the expected value of its counterpart with random variables \citep{Pan21s}. The linear equalities \eqref{eq:H_PCE_SOCP_hankel}--\eqref{eq:PCEOCP_Ini} encode the dynamics of the PCE coefficients~\eqref{eq:ARX_PCE} in a non-parametric fashion and based on measured data, cf. Lemma~\ref{lem:RVfundamental} and Corollary~\ref{cor:mixed_funda}.

The initial condition of OCP~\eqref{eq:PCEOCP} is considered in \eqref{eq:PCEOCP_Ini}. Given $\pce{z}^{[0,L-1],\star}_{1|k-1}$ as the predicted value of $\pce{z}^{j}_k$ with respect to the optimal solution at time instant $k-1$, we specify the constraint set in  \eqref{eq:PCEOCP_Ini} as 
\begin{equation}\label{eq:ini}
	\mbb{Z}_{\ini} \doteq \left\{\pcecoe{z}{[0,L-1]}_{0|k} \middle|\begin{gathered}
	\pcecoe{z}{0}_{0|k}= \mu z_k + (1-\mu)\pcecoe{z}{0,\star}_{1|k-1},\quad 0\leq\mu \leq 1 \\
	\sum_{j=1}^{L_{\ini}-1}\pcecoe{z}{j}_{0|k}\pcecoe{z}{j\top}_{0|k} \norm{\phi^j}^2 =(1-\mu)^2\sum_{j=1}^{L-1}\pcecoe{z}{j,\star}_{1|k-1}\pcecoe{z}{j,\star\top}_{1|k-1} \norm{\phi^j}^2\\
		\pcecoe{z}{j}_{0|k}=0,\quad j =\I_{[L_{\ini},L-1]}.
	\end{gathered} 
\right 	\}
\end{equation}
Note that in \eqref{eq:ini} we consider the initial condition $Z_{0|k}=\sum_{j=0}^{L-1} \pce{z}^j_{0|k}\phi^j$  to be a random variable rather than the current realization $z_k$. Specifically, as $\mu \in [0,1]$  we enforce the expected value and the covariance of $Z_{0|k}$ to be a convex combination  of $z_k$ and $Z^\star_{1|k-1}=\sum_{j=0}^{L-1} \pce{z}^{j,\star}_{1|k-1}\phi^j$ which is the predicted valued of $Z_k$. In addition, we can design the distribution of $Z_{0|k}$ by the choice of $\{\phi_{\ini}^j\}_{j=0}^{L_{\ini}-1}$ which is equivalent to $\{\phi^j\}_{j=0}^{L_{\ini}-1}$ as shown in \eqref{eq:finite_basis}. For the sake of efficient computation, we consider  $\{\phi^j_{\ini} \}_{j=1}^{L_\ini-1}$ to be $\dimz$ dimensional i.i.d. Gaussian, i.e.
\begin{equation}\label{eq:Gaussian_ini}
\phi_{\ini}^0=1,\quad	L_{\ini}=\dimz+1, \quad \phi^j_{\ini} \sim \mcl{N}(0,1),\quad \forall j \in \I_{[1,\dimz]}.
\end{equation}
This way, we have $Z_{0|k}$ also to be  Gaussian since it is a linear combination of  $\{\phi_{\ini}^j\}_{j=0}^{L_{\ini}-1}$. Notice that this choice does not prevent the consideration of non-Gaussian disturbances since the basis $\varphi^{j}$ for disturbances in \eqref{eq:finite_basis} is chosen according to the underlying distribution. 

Moreover, with  $\{\phi_{\ini}^j\}_{j=0}^{L_{\ini}-1}$ satisfying \eqref{eq:Gaussian_ini}, we reformulate the quadratic and nonconvex equality constraints of \eqref{eq:ini} in a linear fashion.  Specifically, note that the  right-hand-side matrix of the quadratic equality constraint
\begin{align*}
	Q_{\text{rhs}} &\doteq	\sum_{j=1}^{\tilde{L}_{\ini}-1}\pce{z}^{j,\star}_{1|k-1}\pce{z}^{j,\star\top}_{1|k-1} \norm{\phi^j}^2 = Q_{\text{rhs}} ^\top \succeq 0,
\end{align*} 
is known prior to solving the OCP. Hence, we can compute its eigen-decomposition $Q_{\text{rhs}} = U_{\text{rhs}}$ $D_{\text{rhs}}U_{\text{rhs}}^\top,$ where $D_{\text{rhs}}$ is a diagonal matrix with non-negative elements.
Then, one positive semi-definite solution of the non-convex quadratic equality constraint reads
\begin{equation}\label{eq:lineaRefor}
[\pce{z}^{1}_{0|k},\pce{z}^{2}_{0|k},  \dots,\pce{z}^{\dimz}_{0|k} ] = (1-\mu) U_{\text{rhs}}D_{\text{rhs}}^{\frac{1}{2}}U_{\text{rhs}}^\top,
\end{equation}
since $\norm{\phi^j}^2=\norm{\phi_{\ini}^j}^2=1$ holds for $j\in \I_{[1,\dimz]}$ if  \eqref{eq:Gaussian_ini} is considered. Substituting the second constraint of \eqref{eq:ini} by \eqref{eq:lineaRefor}, we avoid the non-convexity at the cost of computing one small-scale eigen-decomposition prior to each optimization.

Furthermore,  \eqref{eq:PCE_chance} is a usually conservative approximation of chance constraints with $\sigma_v = \sqrt{(2-\varepsilon_v)/\varepsilon_v}$, $v \in \{u,y\}$ \citep{farina13probabilistic}. The required probabilities are indicated by $1-\varepsilon_u$ and $1-\varepsilon_y$. We impose a causality constraint in \eqref{eq:causality} by considering $U_{i|k}$ as an affine feedback of $Z_{i|k}$.  The terminal constraints are specified in \eqref{eq:terminal}. Specifically, considering $Z_{N|k}\doteq \sum_{j=0}^{L-1}\pce{z}^j_{N|k}\phi^j$, the terminal constraints \eqref{eq:terminal} require the expected value of $Z_{N|k}$ to be inside of the terminal region $\Zf$ and the trace of its covariance weighted by $\Gamma$ to be smaller than $\gamma \in R^+$. Precisely, we assume that these terminal ingredients satisfy \cite[Assumption~3]{Pan2022}. For more details on a data-driven design of the terminal ingredients, we refer to the detailed discussions by \cite{Pan2022}.

\subsection{Predictive control scheme and closed-loop properties}\label{sec:PredictiveControl}

With the interpolated initial conditions,  we extend the output-feedback stochastic data-driven predictive control scheme proposed by \cite{Pan2022} based on OCP~\eqref{eq:PCEOCP}.

The predictive control scheme consists of the off-line data collection phase and the on-line optimization phase. In the off-line phase, a random input and disturbance trajectory $(\mbf{u},\mbf{w})_{[0,T-1]}$ is generated to obtain $\mbf{y}_{[0,T-1]}$. Note that the disturbance trajectory $\mbf{w}_{[0,T-1]}$ can also be estimated from output data, cf. \citep{Pan21s,Wang2022} for details. Moreover, the recorded input, output, and disturbance trajectories are used to determine the terminal ingredients and to construct the Hankel matrices for OCP~\eqref{eq:PCEOCP}. 

 In the on-line optimization phase, we assume the OCP~\eqref{eq:PCEOCP} is feasible at time instant $k=0$ with $ \pcecoe{z}{0,\star}_{1|-1} =z_0, \pcecoe{z}{[1,L-1],\star}_{1|-1}=0$ in \eqref{eq:PCEOCP_Ini} such that only the measured initial condition $z_0$ is considered. Then at each time $k$, we solve OCP~\eqref{eq:PCEOCP} to obtain the PCE coefficients of the first optimal input and the interpolated initial condition, i.e. $\pce{u}^{j,\star}_{0|k}$ and $\pce{z}^{j,\star}_{0|k}$, respectively. Notice that $\pcecoe{u}{j,\star}_{0|k} =0 $ holds for $j \in \I_{[L_{\ini}, L-1]}$ due to the causality condition \eqref{eq:causality}. Specifically, we consider the feedback input $u^{\text{cl}}_k$ as a realization of random variable given by $\pce{u}^{j,\star}_{0|k}$,
\begin{equation}\label{eq:feedback}
	u^{\text{cl}}_k = \sum_{j=0}^{L_{\ini}-1} \pce{u}^{j,\star}_{0|k} \phi^j_{\ini}\relx,
\end{equation} 
where  $\{\phi^j_{\ini}\relx\}_{j=1}^{L_{\ini}-1}$ are obtained by considering  the the current measured  $z_k$ as a realization of random variable given by $\pce{z}^{j,\star}_{0|k}$, cf. \citep{Pan2022}. Hence, the optimal solution to OCP~\eqref{eq:PCEOCP} is indeed an affine feedback policy in the realizations $\phi^j_{\ini}\relx$.

Applying the feedback input $u_{k}^\text{cl}$ determined by \eqref{eq:feedback}, the closed-loop dynamics of \eqref{eq:ARX_realization} for a given initial condition $z_{\ini}^\text{cl}$ and disturbances $\{w_k\}_{k\in \N}$ read
\begin{subequations}\label{eq:ARX_cl}
\begin{equation}\label{eq:ARX_realization_cl}
	y_{k}^\text{cl} = \Phi z_{k}^\text{cl}  + D u_{k}^\text{cl}  + w_{k}, \quad z_{0}^\text{cl}= z_{\ini}^\text{cl}.
\end{equation}	
Moreover, we obtain the sequence $V_{N,k} \in \R$, $k\in\N$ corresponding to the optimal value function of \eqref{eq:PCEOCP} evaluated in the closed loop.
Accordingly, considering a probabilistic initial condition $Z_{\ini}^\text{cl}$ and probabilistic disturbance $W_k$, $k\in \N$, we obtain the closed-loop dynamics in random variables
\begin{equation}\label{eq:ARX_cl_RM}
	Y_{k}^\text{cl} = \Phi Z_{k}^\text{cl} + D U_{k}^\text{cl} +W_{k}, \quad Z_{0}^\text{cl}= Z_{\ini}^\text{cl},
\end{equation}
\end{subequations}
where conceptually  the realization of $U_{k}^\text{cl}$ is  $u_{k}^\text{cl}$. Similarly, we define the probabilistic optimal cost $\mcl{V}_{N,k}\in \spl $ as  $\mcl{V}_{N,k}\relx= V_{N,k}$
The following theorem summarizes the closed-loop properties of the proposed scheme.
\begin{theorem}[Recursive feasibility and stability]\label{thm:stability}
	Consider the closed-loop dynamics \eqref{eq:ARX_cl} resulting from  the proposed predictive  control algorithm based on OCP~\eqref{eq:PCEOCP}.
	Suppose that at time instant $k=0$, 
	OCP~\eqref{eq:PCEOCP} is feasible with the initial condition in \eqref{eq:PCEOCP_Ini} as $ \pcecoe{z}{0,\star}_{1|-1} =z_0^\text{cl}=z_{\ini}^\text{cl}, \pcecoe{z}{[1,L-1],\star}_{1|-1}=0$. Then, OCP~\eqref{eq:PCEOCP} is feasible at all time instants $k \in \N^+$ with the initial condition in \eqref{eq:PCEOCP_Ini} updated with the current measured initial condition $z_k^\text{cl}$ and the predicted PCE coefficients $ \pcecoe{z}{[0,L-1],\star}_{1|k-1}$ based on the optimal solution of the previous solution obtained at $k-1$.

Moreover, let $\alpha \doteq \trace\left(\Sigma_W(Q+\widetilde{E}^\top P \widetilde{E})\right) \in \R^+$, then the following statements hold:
\begin{itemize}
	\item[(i)] The optimal performance index of OCP~\eqref{eq:PCEOCP} at consecutive time instants satisfies
\begin{equation}\label{eq:cost_decay}
\mean\left[\mcl{V}_{N,k+1} - \mcl{V}_{N,k}\right] \leq  - \mean \big[ \|U^\text{cl}_{k}\|^2_{R}+ \|Y^\text{cl}_{k}\|^2_{Q} \big]+\alpha.
\end{equation}
\item[(ii)] In addition, 
\begin{equation}\label{eq:average_cost_bound}
 \lim_{k\rightarrow\infty} \frac{1}{k}\sum_{i=0}^k \mean \big[ \|U^\text{cl}_{i}\|^2_{R}+ \|Y^\text{cl}_{i}\|^2_{Q} \big] \leq \alpha,
\end{equation}
 i.e., the averaged asymptotic cost of the proposed algorithm is bounded from above by $\alpha$.
\end{itemize}
\end{theorem}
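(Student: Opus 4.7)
The plan is to follow the classical ``candidate solution'' MPC argument, exploiting the freedom offered by the interpolation parameter $\mu$ in \eqref{eq:ini}. The key observation is that $\mu=0$ corresponds to selecting the predicted initial condition $\pcecoe{z}{[0,L-1],\star}_{1|k-1}$ from the previous step, which is precisely the shifted-tail condition that makes recursive feasibility arguments in stochastic MPC go through. I would therefore construct the candidate at time $k+1$ from the optimum at time $k$ by (a) fixing $\mu=0$ in \eqref{eq:ini} so that $\pcecoe{z}{[0,L-1]}_{0|k+1} = \pcecoe{z}{[0,L-1],\star}_{1|k}$, (b) taking the one-step left shift of the previous optimal PCE coefficient trajectory on the overlapping horizon, and (c) appending at the last step the PCE coefficients generated by the terminal feedback law postulated in \cite[Assumption~3]{Pan2022}.

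Feasibility of this candidate then has to be verified constraint by constraint. The Hankel equality~\eqref{eq:H_PCE_SOCP_hankel} follows from Corollary~\ref{cor:mixed_funda} applied to the concatenation of the shifted trajectory with the appended terminal piece. The causality constraint~\eqref{eq:causality} is inherited on the shifted portion of the horizon and holds on the appended step by construction of the terminal policy. The chance constraints~\eqref{eq:PCE_chance} are inherited for the overlapping horizon and are guaranteed on the appended step by the terminal region properties in \cite[Assumption~3]{Pan2022}; the terminal constraints~\eqref{eq:terminal} then follow from invariance of $\Zf$ under the terminal policy together with the prescribed covariance bound.

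For claim~(i), I would use the suboptimality bound $V_{N,k+1}\le$ (cost of the candidate). The shifted stage costs cancel and what remains is the difference between the new and old terminal costs together with the dropped stage cost at time $k$. Using the Lyapunov-type decrease property on $P$ embedded in \cite[Assumption~3]{Pan2022}, the deterministic part collapses into $-\|\pce u^{\cdot,\star}_{0|k}\|_R^2-\|\pce y^{\cdot,\star}_{0|k}\|_Q^2$ evaluated through \eqref{eq:PCEmoments}, and the only extra contribution is the disturbance-induced term coming from $W_k$ propagating into both the new stage cost and the shifted terminal state. Passing back to random variables via \eqref{eq:PCEmoments}, taking expectations and using $\mean[W_k]=0$, $\covar[W_k]=\Sigma_W$, the disturbance contribution collapses to $\trace(\Sigma_W(Q+\widetilde E^\top P\widetilde E))=\alpha$, yielding \eqref{eq:cost_decay}. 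Claim~(ii) then follows by a textbook telescoping: summing \eqref{eq:cost_decay} from $0$ to $k$, dividing by $k+1$, and using $\mean[\mcl V_{N,k+1}]\ge 0$ together with the finiteness of $\mean[\mcl V_{N,0}]$ implied by the standing feasibility assumption at $k=0$ makes the boundary terms vanish in the limit.

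The main obstacle I expect lies in the feasibility step, specifically in the re-indexing of the time-varying PCE basis~\eqref{eq:finite_basis} when the previous optimal trajectory is shifted: at time $k+1$ one drops the disturbance basis block associated with step $k$ and appends a fresh one at the far end, so the left-shifted coefficients have to be reinterpreted in the new basis before they can populate the candidate. The Gaussian choice~\eqref{eq:Gaussian_ini} together with the linear reformulation~\eqref{eq:lineaRefor} is precisely what keeps this relabelling tractable, and the choice $\mu=0$ in \eqref{eq:ini} is what decouples the initialisation bookkeeping from the Hankel-based dynamics propagation, so that the candidate constructed as above indeed lies in the image of the data matrices in \eqref{eq:H_PCE_SOCP_hankel}.
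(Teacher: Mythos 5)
Your proposal is correct in substance, but it takes a different route from the paper. The paper's proof is a two-line \emph{reduction}: it observes that the interpolated constraint \eqref{eq:ini} with $\mu\in[0,1]$ contains as a special case the binary initialization of \cite{Pan2022}, so recursive feasibility is inherited from \cite[Proposition~1]{Pan2022}, and since enlarging the feasible set can only decrease the optimal value, the candidate used in the proof of \cite[Theorem~1]{Pan2022} still upper-bounds $V_{N,k+1}$ and the decrease inequality \eqref{eq:cost_decay} transfers verbatim. You instead re-derive that cited result from scratch: you build the shifted candidate explicitly, check \eqref{eq:H_PCE_SOCP_hankel}--\eqref{eq:uyz} one by one, and extract $\alpha=\trace(\Sigma_W(Q+\widetilde E^\top P\widetilde E))$ from the disturbance entering the stage cost and the shifted terminal state. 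The two arguments are the same at bottom --- the paper simply outsources the candidate construction to \cite{Pan2022} --- so what your version buys is self-containedness and an explicit account of the one genuinely new bookkeeping issue, namely the re-indexing of the time-varying basis \eqref{eq:finite_basis} and the fact that the Gaussian re-initialization \eqref{eq:Gaussian_ini} matches only the first two moments of $Z^{\star}_{1|k}$; it is worth adding one sentence observing that this suffices because every constraint and the cost in \eqref{eq:PCEOCP} depend on the decision variables only through those two moments. What your version costs is duplication of material the paper deliberately cites rather than reproves.

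One point of divergence deserves attention: you anchor the candidate at $\mu=0$, which under the convention of \eqref{eq:ini} selects the \emph{predicted} initial condition $\pcecoe{z}{[0,L-1],\star}_{1|k-1}$, whereas the paper's sketch invokes recursive feasibility ``with $\mu=1$,'' which under the same convention is the \emph{measured}, zero-covariance initialization. For the shifted-tail argument the predicted endpoint is the one that works --- the measured realization carries no feasibility guarantee under unbounded disturbances --- so your choice is the mechanically correct one; the paper's sketch appears to use the opposite labelling of $\mu$ (or inherits the convention of \cite{Pan2022}), and either way both proofs ultimately rest on the same fact that one endpoint of the interpolation always admits the shifted feasible candidate.
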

\begin{proof}(Sketch)
	In \cite{Pan2022}, we present sufficient conditions for recursive feasibility and stability of the aforementioned predictive scheme with binary selection of initial condition, i.e. $\mu \in \{0,1\}$. The proof relies on the fact that  with $\mu=1$ OCP~\eqref{eq:PCEOCP} is recursively feasible, cf. \cite[Proposition~1]{Pan2022}.  Hence, with the interpolation condition \eqref{eq:ini}, i.e. $\mu \in [0,1]$, the recursive feasibility naturally holds since $\mu=1$ is included. Moreover, due to the optimization over the initial condition, the optimal cost $V_{N,k}$ with $\mu \in [0,1]$ is bounded from above by the one with $\mu \in \{0,1\}$, which allows to infer the stability results \eqref{eq:cost_decay}--\eqref{eq:average_cost_bound} via the proof of  \cite[Theorem~1]{Pan2022}.
\end{proof}

\section{Numerical Example}\label{sec:examples}
We consider  the LTI aircraft model given by \cite{maciejowski02predictive} exactly discretized with sampling time $t_s=0.5~\text{s}$. 	The system matrices are	
\begin{align*}
	\Phi &= \scalemath{0.8}{\begin{bmatrix} -\phantom{0}0.019	&-\phantom{0}1.440	&-\phantom{0}0.201&	\phantom{-0}0.256&	\phantom{-0}0.050	&\phantom{-0}0.160&	-\phantom{0}0.256&	\phantom{-0}0.0860\\
			\phantom{-0}0.711&	-\phantom{0}1.800&	-\phantom{0}4.773&	\phantom{-0}3.6875&	\phantom{-0}0.650&	\phantom{-0}2.982&	-\phantom{0}2.688&	\phantom{-0}1.707\\
			\phantom{-0}1.444&	-\phantom{0}26.922	&-\phantom{0}15.746&	\phantom{-}12.898&	\phantom{-0}2.319	&\phantom{-}10.461	&-12.897	&\phantom{-0}5.171
	\end{bmatrix}},
\end{align*}
and	$D = 0_{3\times 1}$ with $\dimy=3$, $\dimu=1$, $T_\ini=2$, and thus $\dimz=8$. Its minimal state-space representation with $\dimx=4$ can be found in \citep{Pan21s}.
As the simulated plant, we consider the system dynamics~\eqref{eq:ARX}, where $W_k, k\in \N$ are i.i.d. uniform random variables with their support on $[-0.01,0.01]\times [-1,1] \times [-0.1,0.1]$.
Note that $Y^j$ denotes the $j$-th element of $Y$. We impose a chance constraint on $Y^1$, i.e. $\mathbb{P}[Y^1 \in \mbb{Y}^1] \geq 1- \varepsilon_y,$
where $\mbb Y^1 = [-1,1]$, $ \varepsilon_y = 0.1$, and $\sigma(\varepsilon_y)=\sqrt{(2-\varepsilon_y)/\varepsilon_y}=4.359$. The weighting matrices in the objective function are $Q=\text{diag}([1, 1, 1])$ and $R = 1$. 

We apply the proposed scheme with prediction horizon $N=10$. In the data collection phase we record input-output trajectories of $90$ steps to  construct the Hankel matrices and to determine  terminal ingredients  $P$, $\Gamma$, $\gamma$ and $\Zf$, cf. \citep{Pan2022}.
To obtain an exact PCE for each component of $W_k$, we employ Lergendre polynomials component-wisely such that $L_w=4$.  As shown in \eqref{eq:Gaussian_ini}, we choose the basis for initial condition accordingly with $L_\ini=1+\dimz=9$. Thus, from \eqref{eq:finite_basis}, the dimension of the overall PCE basis as $L=39$. 

With $50$ different sampled sequences of disturbance realizations, we show the corresponding closed-loop realization trajectories of the proposed scheme in Figure~1(a). It can be seen that the chance constraint for output $Y^1$ is satisfied with a high probability, while  $Y^2$ and $Y^3$ converged to a neighborhood of $0$. The (over-time) averaged cost trajectories are depicts in Figure~1(b) in a semi-logarithmic plot. We see the (over-sampling) averaged asymptotic value is bounded above by $\alpha= 295.21$ corresponding to the chosen settings, and thus it is in line with the insights of Theorem \ref{thm:stability}.
To illustrate the evolution of statistical distributions of the closed-loop trajectories \eqref{eq:ARX_cl_RM}, we sample a total of $1000$ sequences of disturbance realizations and initial conditions around $[0,-100,0]^\top$. Then, we compute the corresponding closed-loop responses. The time evolution of the (normalized) histograms of the output realizations $y^2$ at $k=0, 5,10, 15, 20$ is depicted in Figure~1(b), where the vertical axis refers to the (approximated) probability density of $Y^2$. As one can see, the proposed control scheme controls the system to a narrow distribution of~$Y^2$ centred at~$0$.

\begin{figure}[t!]
	\centering
	\begin{tabular}[b]{c}
		\includegraphics[width=.45\linewidth,trim={0 5mm 0 2mm},clip]{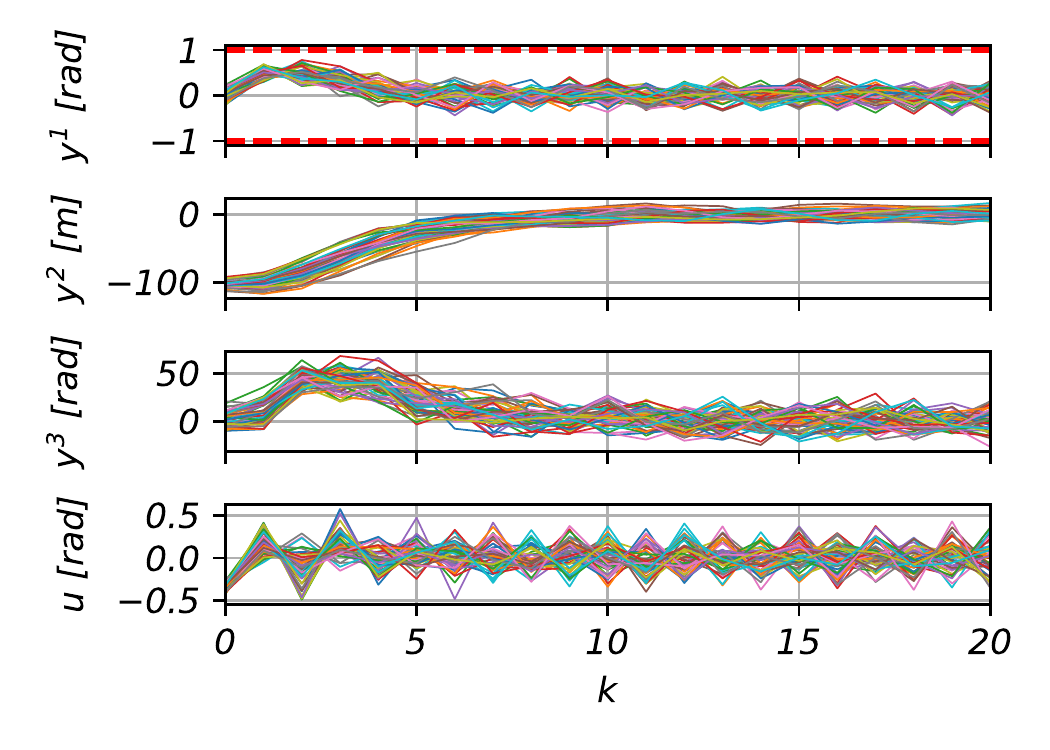} \\
		\small (a)\\
		\includegraphics[width=.45\linewidth,trim={0 5mm 0 2mm},clip]{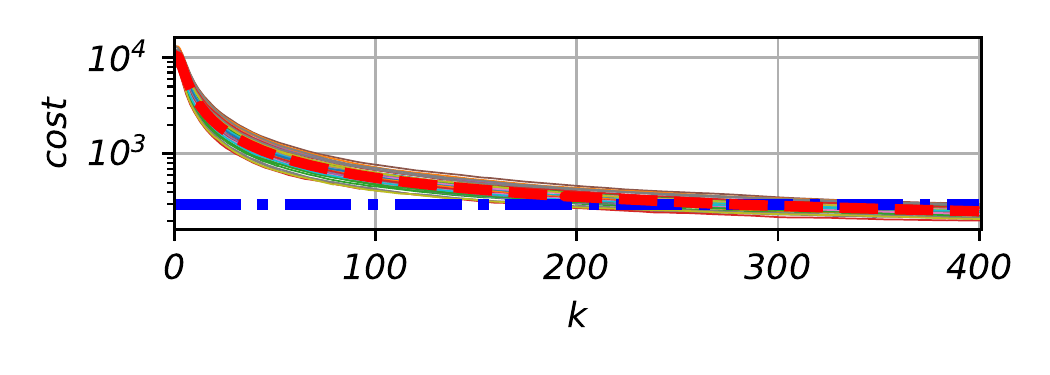} \\
		\small (b)
	\end{tabular} 
	\begin{tabular}[b]{c}
		\includegraphics[width=.45\linewidth]{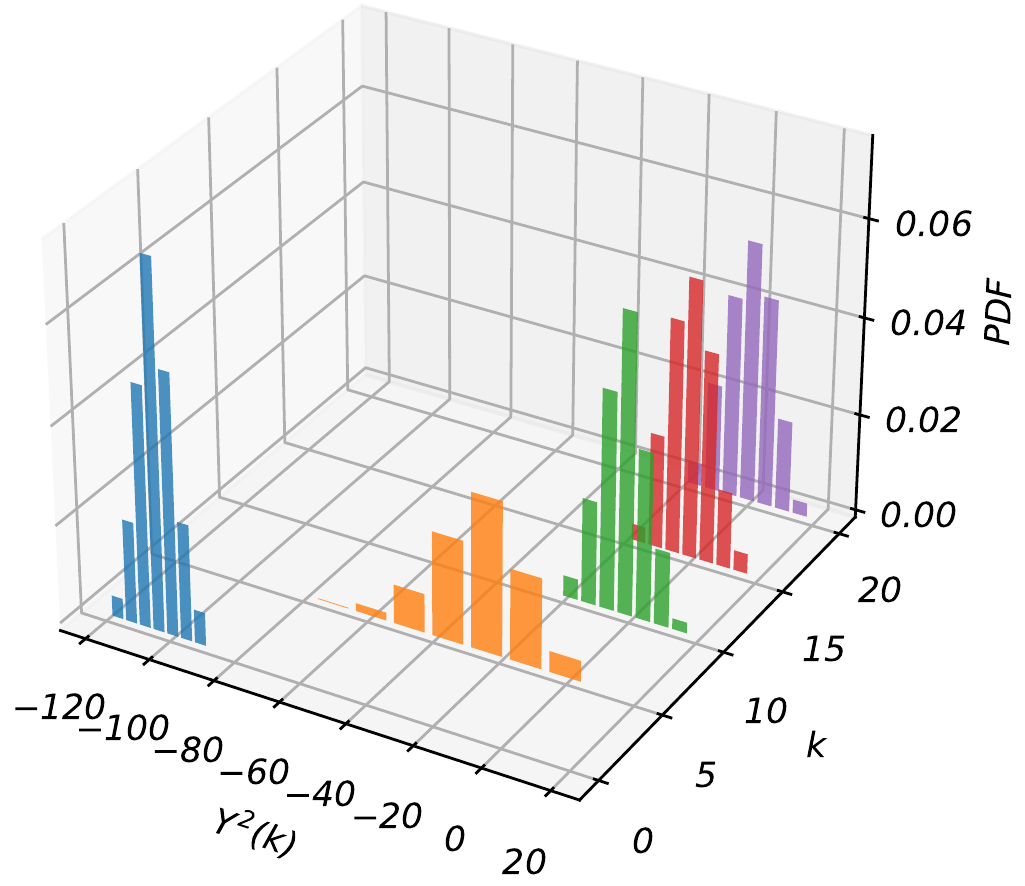} \\
		\small (c)
	\end{tabular}
	\caption{(a)~50 different closed-loop realization trajectories. The red-dashed lines represent the chance constraints.  (b)~Averaged cost (over time) of 50 different closed-loop realization trajectories. The red dashed lines represent the averaged asymptotic cost over sampling; the blue dash-dotted line represents the average asymptotic cost given by \eqref{eq:average_cost_bound}.
	(c)~Histograms of the output $Y^2$ from 1000 closed-loop realization trajectories.}
\end{figure}
\section{Conclusion}\label{sec:conclusion}
This paper has investigated data-driven stochastic output-feedback predictive control of linear time-invariant systems. We have shown that the concept of interpolating initial conditions can and should be exploited in the data-driven stochastic setting. Specifically, we have given sufficient conditions for recursive feasibility and practical stability as well as a corresponding performance bound. Numerical results illustrate the efficacy of the proposed approach. 
Our results, which are based on a recently proposed stochastic extension to Willems' fundamental lemma, underpin that stochastic predictive control can be formulated in data-driven fashion. 
Future work will consider tailored numerical methods for real-time feasible implementation and less restrictive stability conditions.

\acks{The authors gratefully acknowledge funding by the  German Research Foundation (Deutsche Forschungsgemeinschaft DFG) under project number 499435839.}

\bibliography{IEEEabrv.bib}

\begin{thebibliography}{38}
\providecommand{\natexlab}[1]{#1}
\providecommand{\url}[1]{\texttt{#1}}
\expandafter\ifx\csname urlstyle\endcsname\relax
  \providecommand{\doi}[1]{doi: #1}\else
  \providecommand{\doi}{doi: \begingroup \urlstyle{rm}\Url}\fi

\bibitem[Allibhoy and Cortes(2020)]{L4DC-allibhoy20a}
A.~Allibhoy and J.~Cortes.
\newblock Data-driven distributed predictive control via network optimization.
\newblock In \emph{Proceedings of the 2nd Conference on Learning for Dynamics
  and Control}, volume 120 of \emph{Proceedings of Machine Learning Research},
  pages 838--839. PMLR, 10--11 Jun 2020.

\bibitem[Berberich et~al.(2020)Berberich, K{\"o}hler, M{\"u}ller, and
  Allg{\"o}wer]{Berberich20}
J.~Berberich, J.~K{\"o}hler, M.~A. M{\"u}ller, and F.~Allg{\"o}wer.
\newblock Data-driven model predictive control with stability and robustness
  guarantees.
\newblock \emph{IEEE Transactions on Automatic Control}, 66\penalty0
  (4):\penalty0 1702--1717, 2020.

\bibitem[Berberich et~al.(2021)Berberich, K{\"o}hler, M{\"u}ller, and
  Allg{\"o}wer]{Berberich2021t}
J.~Berberich, J.~K{\"o}hler, M.~A. M{\"u}ller, and F.~Allg{\"o}wer.
\newblock {On the design of terminal ingredients for data-driven MPC}.
\newblock \emph{IFAC-PapersOnLine}, 54\penalty0 (6):\penalty0 257--263, 2021.

\bibitem[Bilgic et~al.(2022)Bilgic, Koch, Pan, and Faulwasser]{Bilgic22}
D.~Bilgic, A.~Koch, G.~Pan, and T.~Faulwasser.
\newblock Toward data-driven predictive control of multi-energy distribution
  systems.
\newblock \emph{Electric Power Systems Research}, 212:\penalty0 108311, 2022.

\bibitem[Bongard et~al.(2022)Bongard, Berberich, Köhler, and
  Allgöwer]{Bongard2022}
J.~Bongard, J.~Berberich, J.~Köhler, and F.~Allgöwer.
\newblock Robust stability analysis of a simple data-driven model predictive
  control approach.
\newblock \emph{IEEE Transactions on Automatic Control}, pages 1--1, 2022.
\newblock \doi{10.1109/TAC.2022.3163110}.

\bibitem[Carlet et~al.(2023)Carlet, Favato, Torchio, Toso, Bolognani, and
  Dörfler]{Carlet2022}
P.~G. Carlet, A.~Favato, R.~Torchio, F.~Toso, S.~Bolognani, and F.~Dörfler.
\newblock Real-time feasibility of data-driven predictive control for
  synchronous motor drives.
\newblock \emph{IEEE Transactions on Power Electronics}, 38\penalty0
  (2):\penalty0 1672--1682, 2023.
\newblock \doi{10.1109/TPEL.2022.3214760}.

\bibitem[Coulson et~al.(2019)Coulson, Lygeros, and D{\"o}rfler]{Coulson2019}
J.~Coulson, J.~Lygeros, and F.~D{\"o}rfler.
\newblock Data-enabled predictive control: In the shallows of the {DeePC}.
\newblock In \emph{2019 18th European Control Conference (ECC)}, pages
  307--312. IEEE, 2019.

\bibitem[De~Persis and Tesi(2020)]{DePersis19}
C.~De~Persis and P.~Tesi.
\newblock Formulas for data-driven control: {S}tabilization, optimality, and
  robustness.
\newblock \emph{IEEE Transactions on Automatic Control}, 65\penalty0
  (3):\penalty0 909--924, 2020.
\newblock \doi{10.1109/TAC.2019.2959924}.

\bibitem[Farina et~al.(2013)Farina, Giulioni, Magni, and
  Scattolini]{farina13probabilistic}
M.~Farina, L.~Giulioni, L.~Magni, and R.~Scattolini.
\newblock A probabilistic approach to model predictive control.
\newblock In \emph{52nd IEEE Conference on Decision and Control}, pages
  7734--7739, 2013.
\newblock \doi{10.1109/CDC.2013.6761117}.

\bibitem[Farina et~al.(2015)Farina, Giulioni, Magni, and
  Scattolini]{Farina2015}
M.~Farina, L.~Giulioni, L.~Magni, and R.~Scattolini.
\newblock An approach to output-feedback {MPC} of stochastic linear
  discrete-time systems.
\newblock \emph{Automatica}, 55:\penalty0 140--149, 2015.

\bibitem[Faulwasser et~al.(2022)Faulwasser, Ou, Pan, Schmitz, and
  Worthmann]{Faulwasser2022}
T.~Faulwasser, R.~Ou, G.~Pan, P.~Schmitz, and K.~Worthmann.
\newblock Behavioral theory for stochastic systems? {A} data-driven journey
  from {Willems to Wiener} and back again.
\newblock \emph{arXiv preprint arXiv:2209.06414}, 2022.

\bibitem[Kantas et~al.(2009)Kantas, Maciejowski, and
  Lecchini-Visintini]{Kantas2009}
N.~Kantas, J.~M. Maciejowski, and A.~Lecchini-Visintini.
\newblock \emph{Sequential {Monte Carlo} for model predictive control}, pages
  263--273.
\newblock Springer Berlin Heidelberg, Berlin, Heidelberg, 2009.
\newblock ISBN 978-3-642-01094-1.
\newblock \doi{10.1007/978-3-642-01094-1_21}.

\bibitem[Kerz et~al.(2021)Kerz, Teutsch, Brüdigam, Wollherr, and
  Leibold]{Kerz21d}
S.~Kerz, J.~Teutsch, T.~Brüdigam, D.~Wollherr, and M.~Leibold.
\newblock Data-driven stochastic model predictive control.
\newblock \emph{arXiv preprint arXiv:2112.04439}, 2021.

\bibitem[Koekoek and Swarttouw(1998)]{koekoek96askey}
R.~Koekoek and R.~F. Swarttouw.
\newblock The {Askey-scheme} of hypergeometric orthogonal polynomials and its
  q-analogue.
\newblock Technical Report 98-17, Department of Technical Mathematics and
  Informatics, Delft University of Technology, Delft, The Netherlands, 1998.

\bibitem[K{\"o}hler and Zeilinger(2022)]{Koehler2022}
J.~K{\"o}hler and M.~N. Zeilinger.
\newblock Recursively feasible stochastic predictive control using an
  interpolating initial state constraint.
\newblock \emph{IEEE Control Systems Letters}, 2022.

\bibitem[Lefebvre(2021)]{Lefebvre20}
T.~Lefebvre.
\newblock On moment estimation from polynomial chaos expansion models.
\newblock \emph{IEEE Control Systems Letters}, 5\penalty0 (5):\penalty0
  1519--1524, 2021.
\newblock \doi{10.1109/LCSYS.2020.3040851}.

\bibitem[Lian and Jones(2021)]{L4DC-lian21a}
Y.~Lian and C.~N. Jones.
\newblock Nonlinear data-enabled prediction and control.
\newblock In \emph{Proceedings of the 3rd Conference on Learning for Dynamics
  and Control}, volume 144 of \emph{Proceedings of Machine Learning Research},
  pages 523--534. PMLR, 07 -- 08 June 2021.

\bibitem[Maciejowski(2002)]{maciejowski02predictive}
J.~M. Maciejowski.
\newblock \emph{Predictive Control with Constraints}.
\newblock Pearson Education, 2002.

\bibitem[Markovsky and Dörfler(2021)]{Markovsky21r}
I.~Markovsky and F.~Dörfler.
\newblock Behavioral systems theory in data-driven analysis, signal processing,
  and control.
\newblock \emph{Annual Reviews in Control}, 52:\penalty0 42--64, 2021.
\newblock ISSN 1367-5788.

\bibitem[Mesbah et~al.(2014)Mesbah, Streif, Findeisen, and Braatz]{Mesbah14}
A.~Mesbah, S.~Streif, R.~Findeisen, and R.~D. Braatz.
\newblock Stochastic nonlinear model predictive control with probabilistic
  constraints.
\newblock In \emph{2014 American Control Conference}, pages 2413--2419, 2014.
\newblock \doi{10.1109/ACC.2014.6858851}.

\bibitem[Mühlpfordt et~al.(2018)Mühlpfordt, Findeisen, Hagenmeyer, and
  Faulwasser]{muehlpfordt18comments}
T.~Mühlpfordt, R.~Findeisen, V.~Hagenmeyer, and T.~Faulwasser.
\newblock Comments on truncation errors for polynomial chaos expansions.
\newblock \emph{IEEE Control Systems Letters}, 2\penalty0 (1):\penalty0
  169--174, 2018.
\newblock \doi{10.1109/LCSYS.2017.2778138}.

\bibitem[Ou et~al.(2021)Ou, Baumann, Grüne, and Faulwasser]{Ou21}
R.~Ou, M.~H. Baumann, L.~Grüne, and T.~Faulwasser.
\newblock A simulation study on turnpikes in stochastic {LQ} optimal control.
\newblock \emph{IFAC-PapersOnLine}, 54\penalty0 (3):\penalty0 516--521, 2021.
\newblock \doi{https://doi.org/10.1016/j.ifacol.2021.08.294}.
\newblock 16th IFAC Symposium on Advanced Control of Chemical Processes ADCHEM
  2021.

\bibitem[Pan et~al.(2022{\natexlab{a}})Pan, Ou, and Faulwasser]{Pan2022}
G.~Pan, R.~Ou, and T.~Faulwasser.
\newblock On data-driven stochastic output-feedback predictive control.
\newblock \emph{arXiv preprint}, 2022{\natexlab{a}}.

\bibitem[Pan et~al.(2022{\natexlab{b}})Pan, Ou, and Faulwasser]{Pan21s}
G.~Pan, R.~Ou, and T.~Faulwasser.
\newblock On a stochastic fundamental lemma and its use for data-driven optimal
  control.
\newblock \emph{IEEE Transactions on Automatic Control, accepted for
  publication}, 2022{\natexlab{b}}.
\newblock arXiv preprint arXiv:2111.13636.

\bibitem[Pan et~al.(2022{\natexlab{c}})Pan, Ou, and Faulwasser]{tudo:pan22a}
G.~Pan, R.~Ou, and T.~Faulwasser.
\newblock Towards data-driven stochastic predictive control.
\newblock \emph{International Journal of Robust and Nonlinear Control},
  2022{\natexlab{c}}.
\newblock Submitted.

\bibitem[Paulson et~al.(2014)Paulson, Mesbah, Streif, Findeisen, and
  Braatz]{paulson14fast}
J.~A. Paulson, A.~Mesbah, S.~Streif, R.~Findeisen, and R.~D. Braatz.
\newblock Fast stochastic model predictive control of high-dimensional systems.
\newblock In \emph{53rd IEEE Conference on Decision and Control}, pages
  2802--2809, 2014.
\newblock \doi{10.1109/CDC.2014.7039819}.

\bibitem[Sadamoto(2022)]{Sadamoto2022}
T.~Sadamoto.
\newblock On equivalence of data informativity for identification and
  data-driven control of partially observable systems.
\newblock \emph{IEEE Transactions on Automatic Control}, pages 1--8, 2022.
\newblock \doi{10.1109/TAC.2022.3202082}.

\bibitem[Schildbach et~al.(2014)Schildbach, Fagiano, Frei, and
  Morari]{Schildbach2014}
G.~Schildbach, L.~Fagiano, C.~Frei, and M.~Morari.
\newblock The scenario approach for stochastic model predictive control with
  bounds on closed-loop constraint violations.
\newblock \emph{Automatica}, 50\penalty0 (12):\penalty0 3009--3018, 2014.
\newblock ISSN 0005-1098.

\bibitem[Schl{\"u}ter and Allg{\"o}wer(2022)]{Schlueter2022}
H.~Schl{\"u}ter and F.~Allg{\"o}wer.
\newblock Stochastic model predictive control using initial state optimization.
\newblock \emph{arXiv preprint arXiv:2203.01844}, 2022.

\bibitem[Sullivan(2015)]{sullivan15introduction}
T.~J. Sullivan.
\newblock \emph{{Introduction to Uncertainty Quantification}}, volume~63.
\newblock Springer, 2015.

\bibitem[Tempo et~al.(2013)Tempo, Calafiore, and Dabbene]{Tempo2013}
R.~Tempo, G.~Calafiore, and F.~Dabbene.
\newblock \emph{Randomized algorithms for analysis and control of uncertain
  systems: with applications}.
\newblock Springer London, 2 edition, 2013.

\bibitem[Wang et~al.(2022{\natexlab{a}})Wang, Zheng, Li, and Xu]{Wang2022a}
J.~Wang, Y.~Zheng, K.~Li, and Q.~Xu.
\newblock {DeeP-LCC: Data-enabled predictive leading cruise control in mixed
  traffic flow}.
\newblock \emph{arXiv preprint arXiv:2203.10639}, 2022{\natexlab{a}}.

\bibitem[Wang et~al.(2022{\natexlab{b}})Wang, Shang, and Huang]{Wang2022}
Y.~Wang, C.~Shang, and D.~Huang.
\newblock Data-driven control of stochastic systems: {An} innovation estimation
  approach.
\newblock \emph{arXiv preprint arXiv:2209.08995}, 2022{\natexlab{b}}.

\bibitem[Wiener(1938)]{wiener38homogeneous}
N.~Wiener.
\newblock The homogeneous chaos.
\newblock \emph{American Journal of Mathematics}, pages 897--936, 1938.

\bibitem[Willems et~al.(2005)Willems, Rapisarda, Markovsky, and {De
  Moor}]{Willems2005}
J.~C. Willems, P.~Rapisarda, I.~Markovsky, and B.~L. {De Moor}.
\newblock A note on persistency of excitation.
\newblock \emph{Systems \& Control Letters}, 54\penalty0 (4):\penalty0
  325--329, 2005.
\newblock \doi{https://doi.org/10.1016/j.sysconle.2004.09.003}.

\bibitem[Wu(2022)]{Wu2022}
L.~Wu.
\newblock {Equivalence of SS-based MPC and ARX-based MPC}.
\newblock \emph{arXiv preprint arXiv:2209.00107}, 2022.

\bibitem[Xiu and Karniadakis(2002)]{xiu02wiener}
D.~Xiu and G.~E. Karniadakis.
\newblock The {W}iener--{A}skey polynomial chaos for stochastic differential
  equations.
\newblock \emph{SIAM Journal on Scientific Computing}, 24\penalty0
  (2):\penalty0 619--644, 2002.
\newblock \doi{10.1137/S1064827501387826}.

\bibitem[Yang and Li(2015)]{Yang15}
H.~Yang and S.~Li.
\newblock A data-driven predictive controller design based on reduced {Hankel}
  matrix.
\newblock In \emph{2015 10th Asian Control Conference (ASCC)}, pages 1--7,
  2015.
\newblock \doi{10.1109/ASCC.2015.7244723}.

\end{thebibliography}

\end{document}